\DeclarePairedDelimiter\ceil{\lceil}{\rceil}
\DeclarePairedDelimiter\floor{\lfloor}{\rfloor}
\newtheorem{thm}{Theorem}[section]
\newtheorem{lem}[thm]{Lemma}
\newcommand{\Tr}{\mathtt{Tr}}
\title{SWAP Test for an Arbitrary Number of Quantum
States}
\author{
  Xavier Gitiaux \\
  Department of Computer Science\\
  Quantum Science and Engineering Center\\
  George Mason University\\
  Fairfax, VA 22030 \\
  \texttt{xgitiaux@gmu.edu} \\
   \And
 Ian Morris\\
  Department of Physics and Astronomy\\
    Quantum Science and Engineering Center\\
  George Mason University\\
  Fairfax, VA 22030 \\
  \texttt{imorris@gmu.edu} \\
   \And
 Maria Emelianenko\\
  Department of Mathematical Sciences\\
    Quantum Science and Engineering Center\\
  George Mason University\\
  Fairfax, VA 22030 \\
  \texttt{memelian@gmu.edu} \\
  \And
 Mingzhen Tian\\
   Department of Physics and Astronomy\\
     Quantum Science and Engineering Center\\
  George Mason University\\
  Fairfax, VA 22030 \\
  \texttt{mtian1@gmu.edu} \\ 
}
\begin{document}
\maketitle

\begin{abstract}
We develop a recursive algorithm to generalize the quantum SWAP test for an arbitrary number $m$ of quantum states requiring $O(m)$ controlled-swap (CSWAP) gates and $O(\log m)$ ancillary qubits.
We construct a quantum circuit able to simultaneously measure overlaps $|\langle \phi_i, \phi_j\rangle |^2$ of $m$ arbitrary pure states $|\phi_1\ldots \phi_m\rangle$. Our construction relies on a pairing unitary that generates a superposition state where every pair of input states is labelled by a basis state formed by the ancillaries. 

\end{abstract}

\keywords{}

\section{Introduction}

The quantum community has made significant advances in developing algorithms that hold the promise of speedup over their classical counterparts. However, implementing these algorithms into quantum circuits remains an open question. 
Many groups resort to machine learning \cite{cincio2018learning, fosel2021quantum}, genetic algorithms \cite{lee07,rasconi2019an}, linear algebra techniques \cite{re:amy:phds,nam2018automated} to design circuits that implement either exactly or approximately quantum primitives that are instrumental to many quantum algorithms. 
Researchers are also developing optimization methods to efficiently map logical circuits to physical quantum machines \cite{itoko2020optimization,zulehner2018an}. 

In our work, we tackle the problem of generalizing the SWAP test to include an arbitrary number of overlapping states. Two-state SWAP tests are useful primitives in quantum computing which rely on controlled-swap (CSWAP) gates and an ancillary qubit to estimate the inner product $|\langle \phi, \psi\rangle|^{2}$ between two pure quantum states $\phi$ and $\psi$ \cite{kopczyk2018quantum}, as well to compute the overlap $\Tr(\rho \sigma)$ of two mixed states $\rho$ and $\sigma$ \cite{garcia2013swap}. If $\rho=\sigma$, a SWAP test measures the purity of the quantum state $\sigma$ \cite{garcia2013swap}. For two $q$-qubit states $\phi,\psi$, the cost of evaluating $|\langle \phi, \psi\rangle|^{2}$ on a classical computer grows exponentially with $q$, while the depth of the SWAP test circuit on a quantum machine grows linearly with $q$ \cite{garcia2013swap}, producing an exponential speedup that brings a quantum advantage provided that error correction/mitigation is achievable.
 
SWAP test appears in a variety of applications that depend on estimation of state overlaps. It has been used in quantum state characterization, such as measurement of entanglement and indistinguishabiliy \cite{Foulds2021, garcia2013swap}. It is useful for benchmarking on a quantum computer by monitoring the amount of decoherence that occurs through a circuit. It is also a key module in hybrid quantum machine learning algorithms.
For instance, in quantum supervised learning \cite{kopczyk2018quantum,ciliberto2018quantum,havlicek2019supervised} and in particular in
quantum support vector machines (SVM) \cite{rebentrost2014quantum, rudolph2020generation}, it is used to measure pairwise distances at the stage of cluster assignment.
 
 While the question of improving the SWAP test for two high-dimensional states has been subject to several studies \cite{cincio2018learning, garcia2013swap}, to date, with the exception of a 3-state model presented in \cite{Hai-Rui, Galvao2019}, SWAP tests have been only implemented between two quantum states at a time \cite{kathuria2020implementation}. 
 
 We study how to extend the SWAP test to $m$ arbitrary pure quantum states $\ket{\phi_{1}...\phi_{m}}$ and compute $|\langle \phi_{i}, \phi_{j}\rangle|^{2}$ for all pairs $i\neq j$ using at most $O(\log{m})$ measurements. Our approach relies on a pairing unitary $\mathcal{U}_{m}$ made of $O(m)$ CSWAP gates controlled by $O(\log{m})$ ancillary qubits. The pairing operator applied to $m$ state registers outputs a superposition state where all pairs $(\phi_{i}, \phi_{j})$ appear with positive probability on the first two state registers and are labelled with a basis state formed by the ancillaries. By applying a standard two-state SWAP test to the resulting superposition state, measurement on the ancillary qubits matches the probability of an ancillary basis state to the state overlap this ancillary basis state is assigned to label.
    
First, we construct a unitary $\mathcal{U}_{4}$ for a 4-state input with three ancillaries and three CSWAP gates. We argue that the resulting circuit is optimal with respect to the number of CSWAP gates when limiting the number of measurements to four. Then, we use our 4-state circuit as a building block to develop a recursive algorithm capable of producing an $m$-state circuit for which all overlaps $|\langle \phi_{i}, \phi_{j}\rangle|^{2}$ are measured on $O(\log{m})$ ancillary qubits. We give a formal analysis of its complexity and show that the resulting circuit relies on $c_m=O(m)$ CSWAP gates and $d_m=O(\log m)$ ancillaries. 

Since our estimates of state overlaps are probabilistic, we need to repeat our experiment to obtain an estimate of each state overlap with a mean squared error less than some precision $\epsilon$. We model these repeated experiments by assuming access to an oracle that constructs the states $\ket{\phi_{1}}, ..., \ket{\phi_{m}}$. We argue that for arbitrary $m$, the constraint to limit the number of measurements to $O(\log m)$ is necessary to guarantee a polynomial number of calls to the oracle.  

Numerically, we implement a genetic procedure to search how to optimally apply CSWAP gates to maximize the number of pairs that appear on the two first registers of a quantum circuit. Our optimization procedure is motivated by the fact that genetic algorithms are well
suited for optimization problems with complex parameter and energy landscapes. Genetic algorithms have been used in a variety of applications including scheduling, building design, trajectory optimization \cite{Hart, Yu}. They are also successful in automating classical circuit design \cite{aly} and more recently, in designing quantum circuits \cite{Potocek}. For $m=8$ input states and six measurements, we show empirically that a genetic algorithm does not find circuits with fewer than nine CSWAP gates, which is the number of CSWAP gates we use in our recursive construction (see \ref{circuit: U3}). 


\begin{figure}[t]
    \centering
    \tikzset{every picture/.style={scale=3}}
    \input{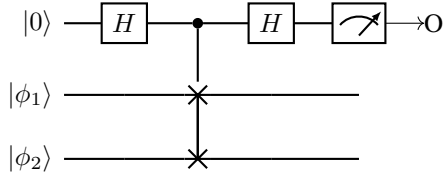}
    \caption{Quantum circuit implementing the SWAP test for two quantum states.}
    \label{fig: circuit1}
\end{figure}

The paper is organized as follows. We present our main algorithm in Section \ref{sec: alg}, which includes a pairing unitary for four states, then, a recursive formula for $m=2^{k}$ states with exact gate and ancillary counts, and a generalization to arbitrary $m$ states. In Section \ref{sec: num}, we provide the results of a genetic algorithm numerical search that gives empirical evidence for our arrangement of CSWAP gates to label each pair of inputs is optimal at least for $m=8$. Possible extensions and open questions are discussed in Section \ref{sec: discussion}.

\section{Multi-state SWAP test algorithm}
\label{sec: alg}
We consider $m$ quantum states $\ket{\phi_{i}}$ for $i=1,\ldots,m$, with $m\in \mathbbm{Z}$. The objective is to obtain the inner products $|\langle \phi_{i}|\phi_{j}\rangle|^{2}$ for $i\neq j$. 

\subsection{2-State Algorithm}

 The quantum circuit that operates the swap test (Figure \ref{fig: circuit1}) includes two registers carrying the states $\ket{\phi_{1}}$ and $\ket{\phi_{2}}$ and an ancillary in $\ket{0}$. The state registers can be  multi-qubit or qudit in general. However, we will treat them as qubits for simplicity. The ancillary qubit is independent of the dimension of a state register. The first Hadamard gate prepares the ancillary from $\ket{0}$ into a superposition state $\ket{+}=\frac{\ket{0} + \ket{1}}{\sqrt{2}}$. The CSWAP gate exchanges the pair of input states $\ket{\phi_{1}}$ and $\ket{\phi_{2}}$ provided that the ancillary qubit is in state $\ket{1}$. The last Hadamard gate completes the circuit that transforms the input pair $\ket{\phi_{1}}\ket{\phi_{2}}$ into a superposition of symmetric and anti-symmetric states coupled with the ancillary in $\ket{0}$ and $\ket{1}$, respectively:
\begin{equation}
  \ket{0}\ket{\phi_{1}}\ket{\phi_{2}} \rightarrow \frac{1}{2}\ket{0}\left(\ket{\phi_{1}}\ket{\phi_{2}} + \ket{\phi_{2}}\ket{\phi_{1}}\right) +  \frac{1}{2}\ket{1}\left(\ket{\phi_{1}}\ket{\phi_{2}} - \ket{\phi_{2}}\ket{\phi_{1}}\right). 
\end{equation}
The probability $p_{0}$ of measuring the ancillary in $\ket{0}$ relates to the inner product $|\langle\phi_{1}|\phi_{2}\rangle|^{2}$.
\begin{equation}
    p_{0} = \frac{1 + |\langle \phi_{1}|\phi_{2}\rangle|^{2}}{2}.
\end{equation}
Therefore, the two-state swap test is equivalent from a measurement perspective to the following transformation:
\begin{equation}\nonumber
    \ket{0}\ket{\phi_{1}}\ket{\phi_{2}}\rightarrow \sqrt{p_{0}}\ket{0y} + \sqrt{1 - p_{0}}\ket{1y'}
\end{equation}
 where $y$, $y'$ are garbage states.

This simple two-state SWAP can be used to handle $m$ quantum states by randomly pairing the input states and applying the circuit in Figure \ref{fig: circuit1} to each pair $\ket{\phi_{i}}, \ket{\phi_{j}}$. However, there are two limitations to this approach. First, it requires a polynomial number of CSWAP gates, one for each pair. Second, the two arbitrary input states cannot be recovered from the output (either symmetric or anti-symmetric) state after measurement on the ancillary \cite{garcia2013swap}. Therefore, in order to implement pairwise inner product calculations on $m$ states, $O(m^{2})$ copies of each state is needed.

Our goal is to build a circuit that can take all $m$ input states at once. The two-state circuit in Figure \ref{fig: circuit1} can still be used to compute the inner products if the two top registers are prepared in a superposition of all $m(m-1)/2$ pairs and each pair is labelled by an orthogonal state of some extra ancillary qubit. We will focus on an algorithm that constructs a pairing unitary $\mathcal{U}_{m}$ that generates this superposition state with proper labelling for $m=2^{k}$ states in section \ref{sec: 2k}; and, for any $m$ in section \ref{sec: any}. In section \ref{sec: th}, we show how a pairing unitary followed by a standard SWAP test relates measurements to state overlaps $|\langle \phi_{i}, \phi_{j}\rangle|^{2}$ for all $i\neq j$.

\subsection{Paring unitary $\mathcal{U}_{m}$ construction for $m=2^k$}
\label{sec: 2k}

First, we construct the pairing unitary $\mathcal{U}_{4}$ for four input states (see Figure \ref{circuit: U2}). There are a total of six pairs of input states, which requires at least 6 orthogonal states to label them. Therefore, the minimum number of ancillary qubits is three in order to encode at least six pairs. The minimum number of CSWAP gates is also three, each ancillary qubit acting as control once. This circuit is the pairing unitary $\mathcal{U}_{4}$ with the smallest CSWAP gates and ancillaries. 

The circuit in Figure \ref{circuit: U2} generates a superposition state containing the terms listed in Table \ref{tab: u2}, where $s_{1,2,3}$ denote the basis states of the three ancillary qubits numbered from bottom to top in the ancillary section in Figure \ref{circuit: U2}. The four state registers are arranged from left to right in the table where indices of the input states  $i, j, i', j' =1, 2, 3, 4$, respectively. The input state $\ket{\phi_{1}\phi_{2}\phi_{3}\phi_{4}}$ matches with the ancillary basis state ${\ket{000}}$ which does not trigger any swap. For all other ancillary basis states, every qubit in ${\ket{1}}$ imparts a swap between the states specified by the corresponding CSWAP gate and thus reorders the input states in the four registers. The transformation by $\mathcal{U}_{4}$ brings every possible pair of the four input states to the first two registers and matches each pair with a different ancillary basis state. 
Labelling is redundant since there are more ancillary basis states than state pairs. In Table \ref{tab: u2} pairs $(\phi_{i},\phi_{i'})$ and $(\phi_{i'}, \phi_{i})$ are matched to both labelling states $\ket{001}$ and $\ket{101}$.         

  \begin{figure}[h]
    \centering
    \tikzset{every picture/.style={scale=3}}
    \input{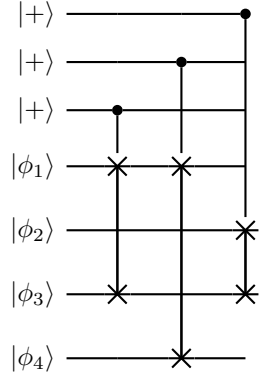}
    \caption{Quantum circuit implementing a pairing unitary $\mathcal{U}_{4}$ for four input states $\ket{\phi_{1}}, \ket{\phi_{2}}, \ket{\phi_{3}}, \ket{\phi_{4}}$.}
    \label{circuit: U2}
\end{figure}

\begin{table}[h]
    \centering
   \begin{center}
\begin{tabular}{ |c|c|c|c|c| }
\hline
 $s_1s_2$& \ket{00} & \ket{01} & \ket{10} & \ket{11} \\ 
 \hline
  $s_{3}=\ket{0}$ & $\phi_{i}\phi_{j}\phi_{i'}\phi_{j'}$ & $\phi_{j^{'}}\phi_{j}\phi_{i'}\phi_{j}$ & $\phi_{i'}\phi_{j}\phi_{i}\phi_{j'}$&$\phi_{j^{'}}\phi_{j}\phi_{i}\phi_{i^{'}}$\\ 
 \hline
 $s_{3}=\ket{1}$ & $\phi_{i}\phi_{i'}\phi_{j}\phi_{j'}$ & $\phi_{j'}\phi_{i'}\phi_{j}\phi_{i}$ & $\phi_{i'}\phi_{i}\phi_{j}\phi_{j'}$&$\phi_{j'}\phi_{i}\phi_{j}\phi_{i'}$\\  
 \hline
 \end{tabular}
\end{center}
    \caption{State of registers $1, 2, 3, 4 $ generated by $\mathcal{U}_{4}$, where the input state indices are $i, j, i', j' = 1, 2, 3, 4$ respectively.}
    \label{tab: u2}
\end{table}
 
In  Figure \ref{circuit: U3}, we construct  $\mathcal{U}_{8}$ from $\mathcal{U}_{4}$ to label the 28 pairs among $8$ inputs. The circuit is built into two sections of three ancillaries and three groups of three CSWAP gates, which amounts to six ancillaries and nine CSWAP gates. The input registers (1 through 8) are set into two groups of four (1 through 4 and 5 through 8). The first two groups of CSWAP gates  are controlled by the same set of ancillaries and apply $\mathcal{U}_{4}$ to both register groups in parallel. This brings all six pairs in $\ket{\phi_{1}}$ through $\ket{\phi_{4}}$ to registers $1$ and $2$ and all six pairs in $\ket{\phi_{5}}$ through $\ket{\phi_{8}}$ to registers $5$ and $6$. 
The last group of CSWAP gates controlled by another set of three ancillaries imparts an $\mathcal{U}_{4}$ to the four registers $1, 2, 5$, and $6$, which brings all 28 pairs of the eight input states to the first two registers ($1$ and $2$). Each pair is matched with a basis state generated by the six ancillaries. 
The $\mathcal{U}_{4}$ transformation tabulated in Table \ref{tab: u2} can be applied three times to the corresponding registers and ancillaries in Figure \ref{circuit: U3} to obtain the explicit superposition state at the output of $\mathcal{U}_{8}$.

\begin{figure}[t]
    \centering
    \tikzset{every picture/.style={scale=3}}
    \input{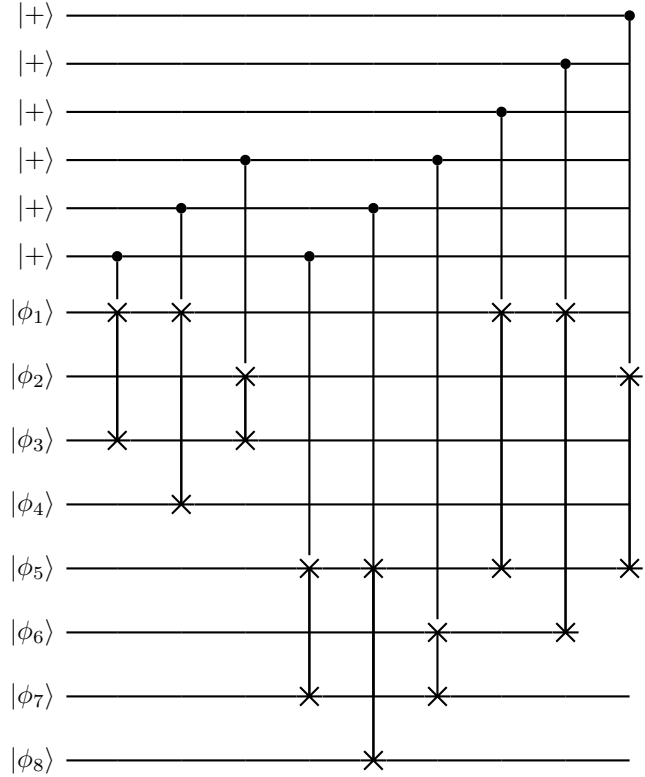}
    \caption{Quantum circuit implementing a pairing unitary $\mathcal{U}_{8}$ for $m=8$ inputs. The circuit stacks two pairing unitaries $\mathcal{U}_{4}$ (see Figure \ref{circuit: U2}) that share the same ancillaries followed by an additional pairing unitary $\mathcal{U}_{4}$ that acts on the first, second, fifth and sixth registers and is controlled by three additional ancillaries.}
     \label{circuit: U3}
\end{figure}

Based on $\mathcal{U}_{4}$ and the recursive method introduced to construct $\mathcal{U}_{8}$, we can scale up the scheme to $\mathcal{U}_{m}$ for arbitrary number $m=2^{k}$ of input states. $\mathcal{U}_{16}$ is built by repeating $\mathcal{U}_{8}$ controlled by the same set of six ancillaries plus an $\mathcal{U}_{4}$ at the end controlled by three more ancillaries. $\mathcal{U}_{m}$ circuit is constructed with $k-2$ recursive steps starting from $\mathcal{U}_{4}$. Therefore, if $c_{m}$ denotes the count of CSWAP and $d_{m}$ the number of ancillaries of $\mathcal{U}_{m}$, we have that $c_{m}=2c_{m/2} + 3$ and $d_{m}= d_{m/2} + 3$, that is $c_{m}=3/2 m -1$ and $d_{m}=3\log_{2}(m/2)$.


It should be noted that the operator $\mathcal{U}_{m}$ can be used to process any arbitrary number of input states for $2^{k-1}<m\le{2^{k}}$. We can use $\mathcal{U}_{2^{k}}$ after padding the extra-registers with known states (e.g. $\ket{0}$). It leads to a total of $3(k-1)$ ancillaries and $3\cdot 2^{k-1}-1$ CSWAP gates. 
Therefore, in the worst case $m=2^{k-1}+1$, by padding the input state, the complexity of the pairing operator $\mathcal{U}_{m}$ to process arbitrary $m$ input states is at most $3m + 2$ CSWAP gates and $2\log_{2}(m+1)$ ancillaries. The next section shows slightly tigher bounds by decomposing $m$ into $\floor{m/2}$ and $m - \floor{m/2}$ instead of padding $m$, but the order of magnitude for large $m$ remains the same.

\subsection{Pairing unitary and complexity estimation for arbitrary number of states}
\label{sec: any}

We generalize the recursive construction of the paring unitary $\mathcal{U}_{m}$ for an arbitrary $m$ input states. The transformation of $\mathcal{U}_{m}$ should take arbitrary unknown states $\ket{\phi_{1}}, ..., \ket{\phi_{m}}$ as input and indexes each pair $(\phi_{i}, \phi_{j})$ with a labelling state $\ket{ij}$ provided by the ancillaries as stated by Lemma 2.1.

  \begin{figure}
    \centering
    \tikzset{every picture/.style={scale=3}}
    \input{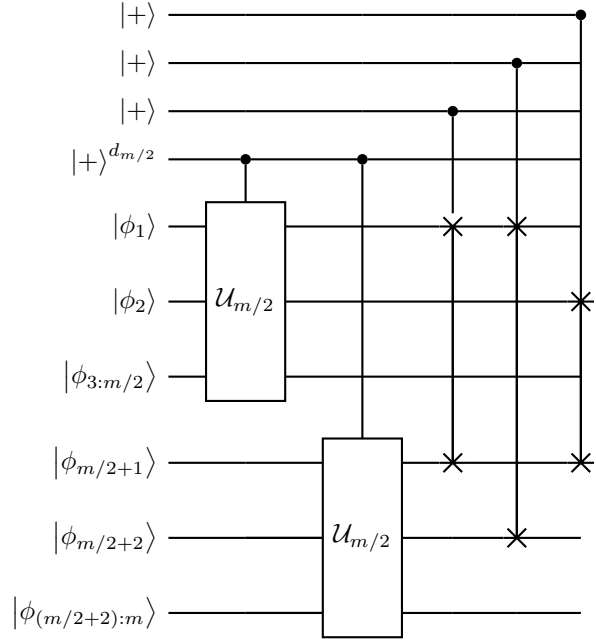}
    \caption{Generic construction of $\mathcal{U}_{m}$ from $\mathcal{U}_{m/2}$. $d_{m/2}$ denotes the number of ancillaries necessary to compute $\mathcal{U}_{m}$. The first pairing unitary $\mathcal{U}_{m/2}$ operates on states $\ket{\phi_{1}}, ..., \ket{\phi_{m/2}}$; the second pairing unitary on states $\ket{\phi_{m/2 + 1}}, ..., \ket{\phi_{m}}$.}
    \label{circuit: Uk}
\end{figure}

\begin{lem}
\label{lem: 1}
Let $n=m(m-1)/2$. Then, there exists a unitary $\mathcal{U}_{m}$ that maps
 \begin{equation}
    \mathcal{U}_{m}: \displaystyle\otimes\ket{0}^{d_{m}}\displaystyle\otimes_{i=1}^{m}\ket{\phi_{i}} \rightarrow \displaystyle\sum_{ij=1}^{n}\ket{ij}\ket{\phi_{i}\phi_{j}G_{ij}},
      \label{eq: pair}
\end{equation} 
where $G_{ij}$ is a garbage state stored in the state registers after the transformation and $d_{m}$ is the number of ancillaries. $\ket{ij}$ denotes the basis state of the $d_{m}$ ancillaries assigned to the state pair $\ket{\phi_{i}\phi_{j}}$. Moreover, $d_{m}=O(\ln m)$ and $\mathcal{U}_{m}$ can be computed by a quantum circuit with $O(m)$ CSWAP gates. 
\end{lem}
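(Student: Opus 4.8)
The plan is to prove the lemma by strong induction on $m$, using the explicit four-state unitary $\mathcal{U}_4$ (whose action is tabulated in Table \ref{tab: u2}) as the base case and the recursive assembly of Figure \ref{circuit: Uk} as the inductive step. Because every gate involved is either a Hadamard or a CSWAP, each of which is unitary, the composite $\mathcal{U}_m$ is automatically unitary, so the real content is the combinatorial claim that every pair $(\phi_i,\phi_j)$ with $i\neq j$ reaches the first two state registers under some ancillary basis state, together with the bounds $d_m=O(\log m)$ and $O(m)$ on the CSWAP count.

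For the base case I would read off Table \ref{tab: u2} directly: the six unordered pairs among four states each appear on registers $1,2$ for at least one of the eight ancillary labels, so the claim holds for $m=4$ with $d_4=3$ ancillaries and three CSWAP gates (the case $m=2$ being trivial, with a single pair and no swap). For the inductive step I would split $m=a+b$ with $a=\floor{m/2}$ and $b=\ceil{m/2}$ and, invoking the inductive hypothesis, place $\mathcal{U}_a$ on the first $a$ registers and $\mathcal{U}_b$ on the last $b$ registers, letting the smaller block reuse a subset of $d_{\ceil{m/2}}$ shared ancillaries $A$. After these two sub-unitaries, registers $1,2$ carry a representative of every within-first-half pair and registers $a+1,a+2$ carry a representative of every within-second-half pair, each entangled with $A$.

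I would then apply the three-CSWAP $\mathcal{U}_4$-block of Figure \ref{circuit: Uk}, controlled by three fresh ancillaries $B$, to registers $\{1,2,a+1,a+2\}$. Reading Table \ref{tab: u2} with the current contents of these four registers as its inputs shows that all six combinations of those contents are delivered to registers $1,2$; combined over the $A$-parameterization this covers the within-first pairs, the within-second pairs, and all $ab$ cross pairs. The counting identity $\binom{a}{2}+\binom{b}{2}+ab=\binom{m}{2}=n$ then confirms that every pair is produced, and the combined label $(A,B)$ names it. Finally the recurrences $c_m=c_a+c_b+3$ and $d_m=d_{\ceil{m/2}}+3$ unwind to $c_m=O(m)$ and $d_m=O(\log m)$, as claimed.

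The hard part, I expect, will be the coverage claim in the inductive step: confirming that the final $\mathcal{U}_4$-block does more than permute contents that already sit on registers $1,2$. The within-second-half pairs are initially produced on registers $a+1,a+2$, so one must check against Table \ref{tab: u2} that suitable settings of the three new ancillaries $B$ actually transport those contents (and the cross combinations) onto registers $1,2$, rather than leaving them stranded in the garbage registers $G_{ij}$. A secondary subtlety is the labelling bookkeeping: since $2^{d_m}>n$, the map is redundant (several $(A,B)$ label the same pair, as already noted for $\mathcal{U}_4$), so to match the clean form $\sum_{ij}\ket{ij}\ket{\phi_i\phi_j G_{ij}}$ I would either restrict attention to one representative label per pair or state the guarantee as ``every pair is labelled by at least one basis state,'' which is all that the subsequent SWAP test requires.
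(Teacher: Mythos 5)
Your proposal is correct and follows essentially the same route as the paper: a recursive construction that splits the inputs into halves of size $\floor{m/2}$ and $\ceil{m/2}$, applies the two sub-unitaries with shared ancillaries, stitches them together with a final $\mathcal{U}_{4}$ block (Table \ref{tab: u2}) controlled by three fresh ancillaries, and then unwinds the recurrences $c_{m}=c_{\ceil{m/2}}+c_{\floor{m/2}}+3$ and $d_{m}=d_{\ceil{m/2}}+3$ to get $O(m)$ CSWAP gates and $O(\log m)$ ancillaries. The coverage check and the counting identity $\binom{a}{2}+\binom{b}{2}+ab=\binom{m}{2}$, as well as the remark that the labelling is redundant and should be read as ``each pair receives at least one label,'' are details the paper leaves implicit, but they confirm rather than alter its argument.
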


\paragraph{Constructive proof of Lemma \ref{lem: 1}.} The proof of Lemma \ref{lem: 1} relies on a recursive construction of $\mathcal{U}_{m}$ from $\mathcal{U}_{m'}$ for $m'<m$. 
\begin{equation}
    \mathcal{U}_{m^{'}}: \displaystyle\otimes\ket{0}^{d_{m'}}\otimes_{i-1}^{m'}\ket{\phi_{i}}\rightarrow \displaystyle\sum_{ij=1}^{n'}\ket{ij}\ket{\phi_{i}\phi_{j}G_{ij}},
\end{equation}

The $m$ input states can be set into two groups $\ket{\phi_{1}}$ through $\ket{\phi_{\ceil{m/2}}}$ and $\ket{\phi_{\ceil{m/2}+1}}$ through $\ket{\phi_{m}}$ as shown in Figure \ref{circuit: Uk}. An $\mathcal{U}_{\ceil{m/2}}$ and $\mathcal{U}_{\floor {m/2}}$ applied to the two groups, respectively, bring all pairs in the first group to registers 1 and 2 and all pairs in the second group to registers $\ceil{m/2}+1$ and $\ceil{m/2}+2$. At this point, an $\mathcal{U}_{4}$ applied to registers 1, 2, $\ceil{m/2}+1$, and $\ceil{m/2}+2$ completes the pairing of all states and bring them to registers 1 and 2. The labelling is provided by the basis states of $d_{\ceil{m/2}}+3$ ancillaries. The explicit pairing follows the same rule in Table \ref{tab: u2} with $i, j\in[1,..., \ceil{m/2}]$ and $i', j'\in[\ceil{m/2}+1,\ldots, m]$.

To analyze the complexity of $\mathcal{U}_{m}$, we notice that the count of CSWAP gates $c_{m}$ follows the recursive relation
\begin{equation}
\label{eq: rec}
    c_{m} = c_{\ceil{m/2}} + c_{\floor{m/2}} + 3.
\end{equation}
were the first two terms come from $\mathcal{U}_{\ceil{m/2}}$ and $\mathcal{U}_{\floor {m/2}}$, and the last term from the last $\mathcal{U}_{4}$.
Similarly, the total number of ancillaries used to compute $\mathcal{U}_{m}$ satisfies:
\begin{equation}
    d_{m} = d_{\ceil{m/2}} + 3,
\end{equation}
where $\mathcal{U}_{\floor {m/2}}$ uses the same of ancillaries as $\mathcal{U}_{\ceil{m/2}}$ and the three additional ancillaries control the last $\mathcal{U}_{4}$.

Assume that for all $m'<m$, $c_{m'}\le \kappa_1 {m'}-\kappa_3$ for $\kappa_{1}\geq 3/2$ and $\kappa_{3}\geq 3$. Then 
$c_{m}\le \kappa_1 \ceil{m/2} + \kappa_1 \floor{m/2} - 2\kappa_3+ 3 = \kappa_1 m -2\kappa_3+3$.
For $\kappa_3\ge 3$, we get $c_m\le \kappa_1 m - \kappa_3$, which is our inductive hypothesis. The hypothesis is also true for $m=4$ since $c_4=3$ and $d_4=3$. 
Therefore the CSWAP gate count is upper-bounded by $c_{m}=O(m)$.\\
We analyze the ancillary count in a similar way by assuming $d_{m'}\le \kappa_2 \log_{2}{(m'/2)}$ for all $m'<m$. Then, we have
\begin{equation}
\label{eq: dm}
\begin{split}
    d_{m} & \leq \kappa_{2}\log(\frac{m+1}{4}) + 3  \\
    & = \kappa_{2}\log(\frac{m}{2})-(\kappa_{2}\log(\frac{2m}{m+1}) - 3).
    \end{split}
\end{equation}
Therefore, $d_{m}\le \kappa_2 \log{(m/2)}$ holds true for all $\kappa_2\ge \frac{3}{\log(2m)-\log(m+1)}$, where the denominator is bounded in a narrow range $\log(5/3) \le \log(2m)-\log(m+1) < 1$ since the recursive method is only effective for $m > 4$.

Precise gate and ancillary counts may be different depending on how we group the input states at each recursive step. For example, five input states can be grouped into $4+1$ or $3+2$ states, each leading to a pairing circuit of five ancillaries and CSWAP gates. On the other hand, for six inputs, grouping into $4+2$ and $3+3$ states requires six CSWAP gates with six ancillaries and seven CSWAP gates with five ancillaries, respectively.
It is important to stress that in our construction, we obtain $O\ln(m)$ measurements because each $\mathcal{U}_{m/2}$ block shares the same ancillaries.

\subsection{Swap test for Arbitrary Number of States}
\label{sec: th}
In this section, we use our pairing unitary $\mathcal{U}_{m}$ to complete the circuit that extends the SWAP test in Figure \ref{fig: circuit1} to any $m\geq 2$.

\begin{thm}
\label{thm: 1}
Given $m$ quantum states $\ket{\phi_{1}}, ..., \ket{\phi_{m}}$ ($m\geq 2)$, there exists a circuit that is equivalent in terms of measurements to the following mapping:
\begin{equation}\nonumber
   \ket{0}\displaystyle\otimes\ket{0}^{d_{m}}\displaystyle\otimes_{i=1}^{m}\ket{\phi_{i}} \rightarrow \displaystyle\sum_{ij=1}^{n}\frac{1}{2^{(d_{m}+1)/2}}\left[\sqrt{p_{0ij}}\ket{0}\ket{ij}\ket{y_{ij}} + \sqrt{1-p_{0ij}}\ket{1}\ket{ij}\ket{y'_{ij}}\right],
\end{equation} 
Where $\ket{y_{ij}}$ and $\ket{y'_{ij}}$ represent the entire state registers, $\ket{ij}$ denotes a basis state of $d_{m}$ ancillaries used in $\mathcal{U}_{m}$. The total ancillary count is $d_{m} +1$.  The probability $p_{0ij}$ of measuring the ancillary state $\ket{0ij}$ directly relates to the overlap of the state pair:
\begin{equation}\nonumber
p_{0ij}=\frac{1 + |\langle \phi_{i}|\phi_{j}\rangle|^{2}}{2^{d_{m}}}
\end{equation}
Moreover, the circuit uses $d_{m}=O(\ln m)$ ancillaries and  $c_{m}=O(m)$ CSWAP gates.
\end{thm}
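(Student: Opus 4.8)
The plan is to build the claimed circuit as a composition of two pieces that have already been analyzed: the pairing unitary $\mathcal{U}_m$ supplied by Lemma \ref{lem: 1}, followed by the elementary two-state SWAP test of Figure \ref{fig: circuit1} applied to the first two state registers together with one fresh ancillary. I would start from $\ket{0}\otimes\ket{0}^{d_m}\otimes_{i=1}^{m}\ket{\phi_i}$, where the leading $\ket{0}$ is the SWAP-test ancillary and the $d_m$ remaining zeros are the pairing ancillaries, and first apply $\mathcal{U}_m$ to the pairing ancillaries and the $m$ state registers. Because $\mathcal{U}_m$ begins by sending its ancillaries to $\ket{+}^{d_m}$, its output is the uniformly weighted superposition $2^{-d_m/2}\sum_{ij}\ket{ij}\ket{\phi_i\phi_j G_{ij}}$, in which each admissible label $\ket{ij}$ places $\ket{\phi_i}$ and $\ket{\phi_j}$ on registers $1$ and $2$ while the remaining registers hold the garbage $G_{ij}$.

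I would then apply the SWAP test (Hadamard on the extra ancillary, a CSWAP of registers $1$ and $2$ controlled by it, and a second Hadamard) exactly as in the two-state computation already carried out in the excerpt. The observation that makes the composition transparent is locality in the label basis: $\mathcal{U}_m$ never touches the SWAP-test ancillary, and the SWAP test couples that ancillary only to registers $1$ and $2$, leaving the labels $\ket{ij}$ and the garbage $G_{ij}$ untouched. Hence the SWAP test acts term by term, sending each $\ket{ij}\ket{\phi_i\phi_j}\ket{G_{ij}}$ to $\tfrac12\ket{0}(\ket{\phi_i\phi_j}+\ket{\phi_j\phi_i})\ket{G_{ij}}+\tfrac12\ket{1}(\ket{\phi_i\phi_j}-\ket{\phi_j\phi_i})\ket{G_{ij}}$. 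Defining $\ket{y_{ij}}$ and $\ket{y'_{ij}}$ as the normalized symmetric and antisymmetric register states and reading off amplitudes then produces the stated mapping, the prefactor $2^{-d_m/2}$ from the pairing ancillaries combining with the $2^{-1/2}$ from the SWAP ancillary into $2^{-(d_m+1)/2}$. The squared amplitude of the $\ket{0}\ket{ij}$ branch evaluates to $(1+|\langle\phi_i|\phi_j\rangle|^2)/2$ times the label weight, reproducing the single-pair relation $p_0=(1+|\langle\phi_i|\phi_j\rangle|^2)/2$ once we condition on reading the label. The complexity is inherited directly from Lemma \ref{lem: 1}: the circuit adds a single ancillary and a single CSWAP to $\mathcal{U}_m$, hence uses $d_m+1=O(\ln m)$ ancillaries and $c_m+1=O(m)$ CSWAP gates.

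The step requiring the most care, and the main obstacle, is the normalization bookkeeping in the presence of redundant labeling. As Table \ref{tab: u2} shows, there are strictly more ancillary basis states than pairs, so a given pair may be matched to several labels; I would need to verify that for every label $\ket{ij}$ entering the sum the first two registers carry precisely the intended pair up to order, so that the symmetric-projection probability is exactly $(1+|\langle\phi_i|\phi_j\rangle|^2)/2$ and is neither diluted nor double counted, and that the $G_{ij}$ factors cleanly out of the SWAP test to contribute only to the normalization of $\ket{y_{ij}}$. Tracking the constant $2^{-(d_m+1)/2}$ consistently across these redundant labels, and confirming that summing the $\ket{0ij}$ probabilities over all labels assigned to a fixed pair reproduces the claimed $p_{0ij}=(1+|\langle\phi_i|\phi_j\rangle|^2)/2^{d_m}$, is the one genuinely delicate point; everything else is a direct composition of Lemma \ref{lem: 1} with the established two-state SWAP calculation.
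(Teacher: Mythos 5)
Your proposal is correct and takes essentially the same route as the paper: its proof constructs exactly the circuit you describe (Figure \ref{circuit: generic}, i.e.\ $\mathcal{U}_{m}$ from Lemma \ref{lem: 1} followed by the two-state SWAP test on registers 1 and 2 with one fresh ancillary), and your term-by-term computation is precisely the paper's displayed state \eqref{eq: gen}. The delicate point you flag---the redundant labelling and the bookkeeping needed to pass from the per-label amplitudes of \eqref{eq: gen} to the compact form and the stated $p_{0ij}$---is genuine, but the paper does not resolve it either, asserting only that \eqref{eq: gen} is ``equivalent to the compact form \ldots for the purpose of measurement,'' so your write-up is, if anything, the more explicit of the two.
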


\begin{figure}
    \centering
    \tikzset{every picture/.style={scale=3}}
    \input{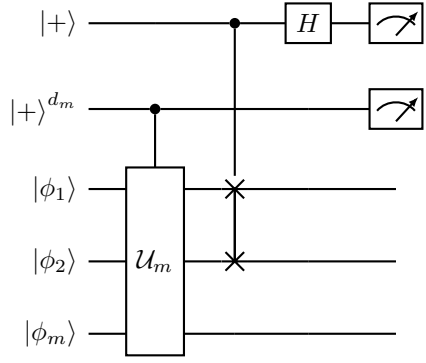}
    \caption{SWAP test for $m\geq 2$ quantum states. The circuit implements a 2-state SWAP test (see Figure \ref{fig: circuit1}) on the superposition state generated by the pairing unitary $\mathcal{U}_{m}$. }
     \label{circuit: generic}
\end{figure}

\begin{proof}
We construct a circuit as in Figure \ref{circuit: generic}.  Using the notations of Lemma \ref{lem: 1}, the last CSWAP and Hadamard gates transform the ancillaries and the state registers.
\begin{equation}
\label{eq: gen}
   \ket{0}\displaystyle\otimes\ket{0}^{d_{m}}\displaystyle\otimes_{i=1}^{m}\ket{\phi_{i}} \rightarrow \displaystyle\sum_{ij=1}^{n}\frac{1}{2^{(d_{m}+2)/2}}\left[\ket{0}\ket{ij}\left(\ket{\phi_{i}}\ket{\phi_{j}} + \ket{\phi_{j}}\ket{\phi_{i}}\right) +  \ket{1}\ket{ij}\left(\ket{\phi_{i}}\ket{\phi_{j}} - \ket{\phi_{j}}\ket{\phi_{i}}\right)\right]\ket{G_{ij}}.
\end{equation} 
This state is equivalent to the compact form in Theorem \ref{thm: 1} for the purpose of measurement.
\end{proof}

\subsection{Sample Complexity}



Given $m$ inputs, it is possible that by using $O(m)$ measurements, we could construct a pairing unitary $\mathcal{U}$ with fewer CSWAP gates. However, $O(m)$ measurements requires an exponential number of classical registers to store the results. Moreover, in this section, we argue that a polynomial sample complexity constraints the number of measurements. We define sample complexity as the number of copies of the $m$ quantum states needed to guarantee that our estimate of all state overlaps is within a precision $\epsilon>0$.

We assume that we have access to an oracle $\mathcal{O}_{m}$ that generates $m$ quantum states $\ket{\phi_{1}}$,..., $\ket{\phi_{m}}$ in time $T$ as inputs to the SWAP test circuit (Figure \ref{circuit: generic}). We can call $\mathcal{O}_{m}$ to estimate $n=m(m-1)/2$ state overlaps $\hat{\bm{\delta}}=\{\delta_{ij}\}_{ij=1}^{n}$ . The sample complexity is defined as the number of calls $N$ needed so that the estimates' error compared to the ground truth overlaps $\bm{\delta}=\{|\langle \phi_{i}|\phi_{j}\rangle|^{2}\}_{iy=1}^{n}$ is less than $\epsilon >0$. We evaluate the error in the estimate as $E(||\bm{\delta} -\widehat{\bm{\delta}}||_{2})$, where
\begin{equation}
    ||\bm{\delta} -\widehat{\bm{\delta}}||_{2}^{2} =\displaystyle\sum_{i< j}(\delta_{ij}-\hat{\delta}_{ij})^{2}.
\end{equation}
We show that an $\epsilon$-precision estimate requires at most $O\left(\frac{2^{2d_{m}}}{\epsilon^{2}}\right)$ calls to the oracle and $O\left(\frac{m2^{2d_{m}}}{\epsilon^{2}}\right)$ CSWAP gates.

\begin{thm}
\label{thm: 2}
The algorithm from Theorem \ref{thm: 1} needs at most $O\left(\frac{2^{2d_{m}}}{\epsilon^{2}}\right)$ calls to the oracle $\mathcal{O}_{m}$ to obtain an estimate $\hat{\delta}$ of $\delta=\{|\langle \phi_{i}|\phi_{j}\rangle|^{2}\}_{i< j=1}^{m}$ such that $E(||\bm{\delta} -\widehat{\bm{\delta}}||_{2})\leq \epsilon$. 
\end{thm}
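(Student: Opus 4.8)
The plan is to turn each oracle call into a single shot of the circuit from Theorem~\ref{thm: 1}, record the measured ancillary string, and build frequency-based estimators of the probabilities $p_{0ij}$. Concretely, after $N$ calls I would measure the $d_m+1$ ancillaries each time and let $N_{0ij}$ be the number of shots returning the outcome $\ket{0}\ket{ij}$. The empirical frequency $\hat p_{0ij}=N_{0ij}/N$ is an unbiased estimator of $p_{0ij}$, and inverting the relation $p_{0ij}=(1+|\langle\phi_i|\phi_j\rangle|^2)/2^{d_m}$ from Theorem~\ref{thm: 1} gives the overlap estimator $\hat\delta_{ij}=2^{d_m}\hat p_{0ij}-1$, which is likewise unbiased for $\delta_{ij}$.

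First I would control the expected $\ell_2$ error by its mean square via Jensen's inequality, $E(\|\bm{\delta}-\widehat{\bm{\delta}}\|_2)\le \sqrt{E(\|\bm{\delta}-\widehat{\bm{\delta}}\|_2^2)}$, so that it suffices to bound the mean squared error. Because the squared $\ell_2$ norm decomposes as a sum over pairs and each estimator is unbiased, the mean square equals $\sum_{i<j}\mathrm{Var}(\hat\delta_{ij})=2^{2d_m}\sum_{i<j}\mathrm{Var}(\hat p_{0ij})$; note that only the individual variances enter, so the correlations between the multinomial counts $N_{0ij}$ never need to be tracked.

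The key step is then a multinomial variance bound. The $N$ shots are i.i.d.\ draws over the $2^{d_m+1}$ ancillary outcomes, so $\mathrm{Var}(\hat p_{0ij})=p_{0ij}(1-p_{0ij})/N\le p_{0ij}/N$. Since the outcomes $\ket{0}\ket{ij}$ are mutually exclusive, their probabilities satisfy $\sum_{i<j}p_{0ij}\le 1$, which collapses the sum to $E(\|\bm{\delta}-\widehat{\bm{\delta}}\|_2^2)\le 2^{2d_m}/N$. Combining with Jensen gives $E(\|\bm{\delta}-\widehat{\bm{\delta}}\|_2)\le 2^{d_m}/\sqrt{N}$, and demanding this be at most $\epsilon$ yields $N\ge 2^{2d_m}/\epsilon^2$, i.e.\ $N=O(2^{2d_m}/\epsilon^2)$ calls, as claimed; multiplying by the $c_m=O(m)$ gates per shot recovers the $O(m\,2^{2d_m}/\epsilon^2)$ gate count.

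I expect the main obstacle to be conceptual rather than computational: being careful that the $p_{0ij}$ really are probabilities of disjoint events so that $\sum_{i<j}p_{0ij}\le 1$ holds despite the label redundancy noted after Table~\ref{tab: u2} (distinct pairs must map to distinct accepted labels, or redundant labels must be aggregated before estimation), and confirming that the index set of the sum in $\|\cdot\|_2^2$ matches exactly the labels produced by $\mathcal{U}_m$. Everything else is a routine i.i.d.\ sampling argument.
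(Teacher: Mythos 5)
Your proposal is correct and follows essentially the same route as the paper: the paper's maximum-likelihood estimator of the multinomial probabilities $p_{0ij}$ is exactly your empirical frequency $N_{0ij}/N$, and both arguments proceed by unbiasedness, the multinomial variance $p_{0ij}(1-p_{0ij})/N$, a bound of the summed variances by $1$, and concavity of the square root (Jensen). The only cosmetic difference is the last bounding step—you use $p(1-p)\le p$ together with disjointness of the outcomes, while the paper uses $n\le 2^{d_m}$ and $\sum_k p_k(1-p_k)\le 1-2^{-d_m}$—and both yield the same $2^{2d_m}/N$ bound.
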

\begin{proof}
Denote $\bm{p_{0}}=\{p_{0ij}\}_{ij=1}^{n}$ where $p_{0ij}$ is defined in Theorem \ref{thm: 1}. $\bm{\delta} = 2^{d_{m}}\bm{p_{0}} -1$. Denote $\widehat{\bm{p_{0}}}$ the maximum-likelihood estimate of $\bm{p_{0}}$ using $N$ copies of the $m$ quantum states. Then,
\begin{equation}\nonumber
    \begin{split}
        E(||\bm{\delta} - \bm{\hat{\delta}}||_{2}^{2}) & = 2^{2d_{m}}\displaystyle\sum_{ij=1}^{n}E(p_{0ij} - \widehat{p_{0ij}})^{2} \\
        & \overset{(a)}{=} 2^{2d_{m}}\displaystyle\sum_{ij=1}^{n}var(\widehat{p_{0ij}}) \\
        & = 2^{2d_{m}}\displaystyle\sum_{ij=1}^{n}\frac{p_{0ij}(1-p_{0ij})}{N} \\
        & \overset{(b)}{\leq} \frac{2^{2d_{m}}}{N}\displaystyle\sum_{ij=1}^{2^{d_{m}}}p_{0ij}(1-p_{0ij})) \\
        & \leq \frac{2^{2d_{m}}}{N}\left(1 - \frac{1}{2^{d_{m}}}\right) \\
        & \leq \frac{2^{2d_{m}}}{N},
    \end{split}
\end{equation}
where $(a)$ uses the fact the maximum-likelihood estimator $\widehat{\bm{p_{0}}}$ is unbiased; and, $(b)$ that $n \leq 2^{d_{m}}$. The result follows from the concavity of the square root function. 
\end{proof}

According to Theorem \ref{thm: 2}, since $d_{m}=O(\ln m)$, we need at most a polynomial number of calls to the oracle $\mathcal{O}_{m}$ for $\hat{\delta}$ to be within $\epsilon$ of $\delta$ in expectation. The next result shows that any algorithm with a pairing unitary $\mathcal{U}$ using $D_{m}$ measurements and followed by a standard SWAP test as in Figure \ref{circuit: generic} requires at least $\Omega\left(\frac{2^{2D_{m}}}{\epsilon^{2}}\right)$ samples for  $E(||\bm{\delta} -\widehat{\bm{\delta}}||_{2})\leq \epsilon$.

\begin{thm}
\label{thm: 3}
Consider a pairing unitary $\mathcal{U}$ followed by a SWAP test to compute all state overlaps $\bm{\delta}$ between $m$ quantum states. If the pairing unitary uses $D_{m}$ measurements, then there exists a state $\ket{\phi_{1}}, ..., \ket{\phi_{m}}$  such that at least $\Omega\left(\frac{2^{2D_{m}}}{\epsilon^{2}}\right)$ samples are needed for  $E(||\bm{\delta} -\widehat{\bm{\delta}}||_{2})\leq \epsilon$.
\end{thm}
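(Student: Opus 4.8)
The plan is to prove Theorem \ref{thm: 3} as an information-theoretic converse to Theorem \ref{thm: 2}: I would reduce the quantum estimation task to a classical parameter-estimation problem and then apply a minimax lower bound of Assouad type. The first step is to identify the statistical model the circuit generates. For \emph{any} pairing unitary $\mathcal{U}$ that measures $D_m$ ancillaries and is followed by the standard two-state SWAP test of Figure \ref{circuit: generic}, each shot returns one sample of a categorical variable over outcomes $(b,\ell)$, where $b\in\{0,1\}$ is the top-qubit result and $\ell$ ranges over the at most $2^{D_m}$ ancillary labels. If label $\ell=ij$ carries total mass $q_{ij}$ (with $\sum_{ij}q_{ij}\le 1$), then conditioning the SWAP test on that label forces $p_{0ij}=q_{ij}(1+\delta_{ij})/2$ and $p_{1ij}=q_{ij}(1-\delta_{ij})/2$, so that $\delta_{ij}=(p_{0ij}-p_{1ij})/q_{ij}$. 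The key structural fact is that the signal for each pair is diluted: only about $Nq_{ij}$ of the $N$ shots land on label $ij$, and $q_{ij}$ is of order $2^{-D_m}$.

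Next I would lower bound the per-coordinate error. From the conditional binomial on label $ij$ (equivalently, from the Fisher information $q_{ij}/(1-\delta_{ij}^2)$ of this two-cell submodel), every estimator satisfies $\mathrm{Var}(\widehat{\delta}_{ij})\ge c\,(1-\delta_{ij}^2)/(N q_{ij})$ for a constant $c>0$. I would take the worst-case input to be mutually (near-)orthogonal states, so that every $\delta_{ij}=0$; this simultaneously maximizes each variance bound and is realizable by an orthonormal family in dimension at least $m$. Summing over the $n=\binom{m}{2}$ pairs and invoking the AM--HM inequality $\sum_{ij}1/q_{ij}\ge n^2/\sum_{ij}q_{ij}\ge n^2$, which holds for whatever label distribution the circuit produces, gives $E(||\bm{\delta}-\widehat{\bm{\delta}}||_2^2)=\Omega(n^2/N)$. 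Since an efficient labelling uses $n=\Theta(2^{D_m})$, this is $\Omega(2^{2D_m}/N)$.

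To obtain the stated bound on $E(||\bm{\delta}-\widehat{\bm{\delta}}||_2)$ itself, and to cover biased as well as unbiased estimators, I would replace the Cram\'er--Rao step by Assouad's lemma applied to the hypercube of overlap patterns $\delta^{\tau}_{ij}\in\{0,\alpha\}$ indexed by $\tau\in\{0,1\}^n$. Choosing the single-coordinate separation $\alpha$ of order $\sqrt{2^{D_m}/N}$, so that neighbouring hypotheses have total-variation distance at most $1/2$ (using $q_{ij}=\Theta(2^{-D_m})$ to bound the per-sample KL divergence), Assouad yields $\max_{\tau}E_{\tau}(||\bm{\delta}-\widehat{\bm{\delta}}||_1)=\Omega(n\alpha)$; combining this with $||x||_2\ge ||x||_1/\sqrt{n}$ gives $\max_{\tau}E_{\tau}(||\bm{\delta}-\widehat{\bm{\delta}}||_2)=\Omega(\sqrt{n}\,\alpha)=\Omega(2^{D_m}/\sqrt{N})$. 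Requiring this to be at most $\epsilon$ forces $N=\Omega(2^{2D_m}/\epsilon^2)$, and the maximizing $\tau$ is the promised hard state.

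The step I expect to be the main obstacle is the structural reduction in full generality: I must show that an arbitrary pairing unitary (not only the recursive one of Lemma \ref{lem: 1}), followed by the SWAP test, always produces this two-cell signal model with label masses $q_{ij}$, so that the circuit cannot escape the dilution by concentrating amplitude on a few pairs. This is precisely where the robust bound $\sum_{ij}1/q_{ij}\ge n^2$ does the work and keeps the argument independent of the particular labelling. A secondary technical point is realizability: I must confirm that the entire hypercube of perturbed overlap patterns corresponds to genuine pure-state Gram matrices, which holds for small $\alpha$ in sufficiently high dimension because the positive-semidefinite unit-diagonal matrices contain a neighbourhood of the identity. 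Finally I would make explicit the regime $n=\Theta(2^{D_m})$, i.e. that the measurements are used efficiently as in the paper's construction, which converts the robust $\Omega(n^2/\epsilon^2)$ bound into the advertised $\Omega(2^{2D_m}/\epsilon^2)$.
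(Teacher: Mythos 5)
Your proposal is correct and, at bottom, follows the same route as the paper: reduce the circuit's output statistics to a classical multinomial estimation problem and apply a minimax lower bound. The difference is one of completeness — the paper's entire proof is a single sentence citing minimax rates for multinomial probabilities \cite{steinhaus1957problem, braess2004asymptotic}, whereas you actually derive the bound. What you add is genuine content: (i) the quantum-to-classical structural reduction, i.e. that any pairing unitary followed by the SWAP test induces the two-cell-per-pair model $p_{0ij}=q_{ij}(1+\delta_{ij})/2$, $p_{1ij}=q_{ij}(1-\delta_{ij})/2$ with $\sum_{ij}q_{ij}\le 1$ — this is never stated in the paper, yet it is exactly what licenses the appeal to classical statistics for an \emph{arbitrary} pairing unitary; (ii) an Assouad argument that lower-bounds the first moment $E(\|\bm{\delta}-\widehat{\bm{\delta}}\|_{2})$ directly, whereas the paper's cited result controls the second moment $E(\|\bm{\delta}-\widehat{\bm{\delta}}\|_{2}^{2})$, and by Jensen a lower bound on the second moment does not imply one on the first, so your route closes a small but real logical gap; (iii) the realizability check that the hard hypercube of overlap patterns corresponds to Gram matrices of genuine pure states. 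Both proofs share the same hidden hypothesis, which you alone make explicit: the advertised $\Omega\left(\frac{2^{2D_{m}}}{\epsilon^{2}}\right)$ only follows when the labelling is efficient, i.e. $n=m(m-1)/2=\Theta(2^{D_{m}})$ with label masses $q_{ij}=\Theta(2^{-D_{m}})$; a wasteful circuit with $2^{D_{m}}\gg n$ but amplitude concentrated on only $n$ labels has sample complexity $\Theta(n^{2}/\epsilon^{2})$, which is what your AM--HM inequality $\sum_{ij}1/q_{ij}\ge n^{2}$ robustly delivers and which contradicts the theorem as literally stated. One loose end to tidy: your Assouad step as written assumes uniform $q_{ij}$; for general masses you should take per-coordinate separations $\alpha_{ij}\propto 1/\sqrt{Nq_{ij}}$, which restores the robust $\Omega(n^{2}/N)$ bound for biased estimators as well — a routine modification.
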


\begin{proof}
We use results on the minimax rate for multinomial probabilities (see \cite{steinhaus1957problem}, \cite{braess2004asymptotic}) that implies that $\inf_{\hat{\delta}}\sup_{\ket{\phi_{1}}, ...\ket{\phi_{m}}}E(||\delta -\hat{\delta}||_{2}^{2}\geq \Omega\left(2^{2D_{m}}\frac{1}{N}\right)$. 
\end{proof}

Theorem \ref{thm: 3} implies that for a pairing unitary followed by a SWAP test to use at most a polynomial number of calls to the oracle $\mathcal{O}_{m}$, it is necessary to limit the number of measurements to $O(\ln m)$. 

\section{Numerical experiments}
\label{sec: num}

\begin{table}[h!]
\centering
\begin{center}
 \begin{tabular}{|c|c|} 
 \hline
 Number of CSWAP & Success rate  \\ 
 \hline
 11& 0.90 \\ 
 \hline
 10 & 0.79  \\
 \hline
\rowcolor{Gainsboro!60} 9& 0.47 \\
 \hline
 8 & 0.0 \\ 
 \hline
\end{tabular}
\end{center}
\label{tab: ga1}
\caption{Probability of the genetic algorithm to find a circuit with 8 inputs among all circuits using six ancillaries but a differing number of CSWAP gates. This table shows that the genetic procedure does not find pairing unitary circuits with six ancillaries but fewer CSWAP gates than the one proposed in Lemma \ref{lem: 1} (shaded row). Success rates are computed as the fraction of genetic trials that find a circuit capable of labelling all 28 overlaps with one the 64 base states encoded by the six ancillaries. The genetic algorithm parameters are as follows: population size is $10^{6}$, mutation rate is $0.5$ and number of iterations is $20,000$.}
\label{table:2}
\end{table}

In this section, we develop an evolutionary optimization procedure to find empirically for given a number of inputs $m$ the maximum number of overlaps that a pairing unitary circuit can label with fixed numbers of control swap $c_{m}$ and ancillaries $d_{m}$. The objective is to test empirically whether we can find pairing unitary circuits with fewer control swaps than the pairing unitary proposed in section 2, while maintaining the same number of ancillaries. Formally, if we denote $\mathcal{F}(m, c_{m}, d_{m})$ the sets of all circuits with $c_{m}$ control swap gates and $d_{m}$ ancillaries, we want to solve the following optimization problem:
\begin{equation}
\label{eq: opt}
    \min_{f\in \mathcal{F}(m, c_{m}, d_{m})} \frac{m(m-1)}{2} - G(f),
\end{equation}
where $G(f)$ is the number of pairs $i<j$ for which on any input of size $m$, the circuit $f$ outputs a state of the form $\ket{\phi_{i}, \phi_{j}*}$ or $\ket{\phi_{j}, \phi_{i}*}$.

To solve the minimization problem \eqref{eq: opt}, we use a genetic algorithm \cite{Hart, Yu, Potocek} \footnote{Code is available at the following \href{https://github.com/Gitiauxx/quant_genetic_algorithm}{github repo}.}. A circuit is a sequence of CSWAP gates $cs(a, t_{1}, t_{2})$, where $a\in\{1, ..., d_{m}\}$ is the control ancillary and $t_{1}, t_{2}\in \{1, ..., m\}$ are the two target registers. Therefore, $|\mathcal{F}(m, c_{m}, d_{m})|= [d_{m}m(m-1)]^{c_{m}}$.

Genetic algorithms are evolutionary optimization procedures that given an initial population of circuits $\{f_{1}, ..., f_{L}\}$ evolves it by :
\begin{itemize}
    \item \textbf{Selection} of the two parent circuits with the lowest cost function $m(m-1)/2 - G(f)$;
    \item \textbf{Crossover} of the two parents circuits along a randomly selected pivot. That is, from parent circuits $f_{pa}=(cs_{1}...cs_{c_{m}})$ and $f^{'}_{pa}=(cs_{1}^{'}...cs_{c_{m}}^{'})$, we generate two offspring as $f_{off}=(cs_{1}, ..., cs_{i-1}, cs^{'}_{i}, ...cs_{c_{m}}^{'})$ and $f_{off}^{'}=(cs_{1}^{'}, ..., cs_{i-1}^{'}, cs_{i}, ...cs_{c_{m}})$, where $i$ is uniformly drawn from $\{1, ..., c_{m}\}$.
    \item \textbf{Mutation} of the offspring $f_{off}$ and $f_{off}^{'}$. That is, with probability $p_{mutation}$, one CSWAP gate of $f_{off}$ is mutated to $cs(a, t_{1}, t_{2})$, where $a$ is uniformly drawn from $\{1, ..., d_{m}\}$ and $t_{1}, t_{2}$ are uniformly drawn from $\{1, ..., c_{m}\}$.
    \item \textbf{Replacement} of two randomly chosen members of the initial population by the new offspring $f_{off}$ and $f_{off}^{'}$.
\end{itemize}
The genetic algorithm iterates over this selection-crossover-mutation-replacement procedure. Genetic algorithms are guaranteed to converge to a global minimum provided a large number of iterations $M$ is allowed \cite{genetic}. The procedure is parameterized by its population size $L$, its mutation rate $p_{mutation}$ and its number of iteration $M$.

In Table \ref{tab: ga1}, we run the genetic algorithm $100$ times with different seeds to minimize the cost \eqref{eq: opt} with $m=8$ inputs. We repeat the experiment for different number of CSWAP gates and report the success rate of the genetic algorithm. We define success rate as the fraction of simulations for which the genetic algorithm achieves a zero cost function out of the 100 seeds we use to initialize the algorithm. For $m=8$, our construction in section 2 proposes a circuit with $c_{m}=9$ control swap gates and $d_{m}=6$ ancillaries. In Table \ref{tab: ga1}, we observe that the genetic algorithm only finds $47\%$ of the time a circuit of size $c_{m}=9$ and $d_{m}=6$ that creates an output state $\ket{\phi_{i}, \phi_{j}*}$ or $\ket{\phi_{j}, \phi_{i}*}$ with positive probability for all $i<j$. We posit that the size of the search space $\mathcal{F}(m, c_{m}, d_{m})$ ($\approx 10^{23}$) limits the ability of the genetic algorithm to converge to the global optimum. Moreover, the genetic algorithm does not appear to find a solution with fewer than nine CSWAP gates. Although this failure could result from shortcomings of our genetic algorithm procedure, it also may indicate that empirically, standard optimization procedures do not seem to find circuits with $c_{m} < 9$ while maintaining $d_{m}\leq 6$.

Figure \ref{fig:pop} shows for $m=8$ the average number of overlaps labelled by circuit that are solutions of the genetic algorithm search. Any number of overlaps less than $28$ indicates a pairing unitary circuit that fails to account for all overlaps. Figure \ref{fig:pop} confirms the  main results in Table \ref{tab: ga1}. First, regardless of the genetic algorithm population size, circuits with fewer than nine CSWAP gates never manage to encode the $28$ overlaps: on average they miss $2.5$ to $2$ overlaps. Secondly, the search for a solution of \eqref{eq: opt} is challenging in the sense that even with nine CSWAP, where we know that a solution exists (see Lemma \ref{lem: 1}), the search generates circuits that on average miss $0.5$ to $1$ overlap, depending on the genetic population size. 

\begin{figure}
    \centering
    \includegraphics[scale=0.5]{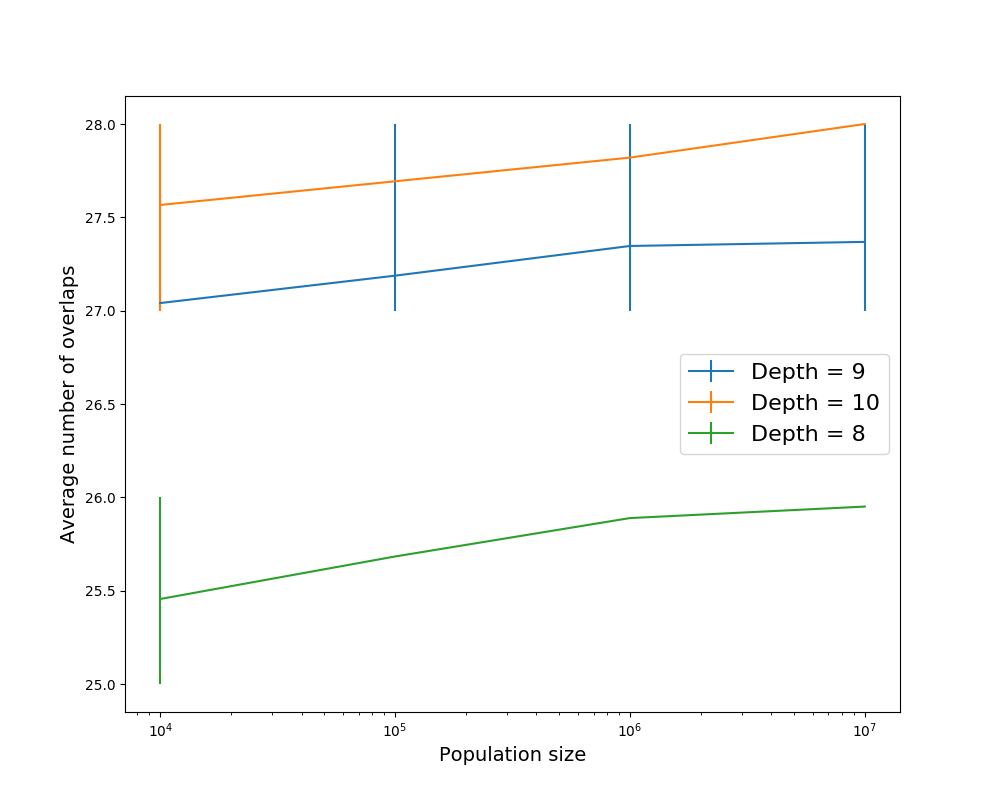}
    \caption{Effect of population size of the number of inner products $\ket{\phi_{i}, \phi_{j}}$ on the first two registers of circuits found by the genetic algorithm procedure. The number of inputs is $m=8$; the number of ancillaries is constrained to be at most $6$. The maximum of possible overlaps is $28$. The plot shows an average over 100 simulations, along with the $25\%$ and $75\%$ quantiles, of the number of overlaps labelled by circuits found by the genetic algorithm.}
    \label{fig:pop}
\end{figure}

\section{Discussion}
\label{sec: discussion}

We have expanded the simple SWAP test to calculate the overlap between all pairs of $m$ arbitrary input states. At the core of our algorithm is a pairing unitary $\mathcal{U}_{m}$ consisting of CSWAP gates applied to the state registers and controlled by ancillary qubits. The pairing unitary prepares a superposition of every state pair appearing in the first two state registers and matched with a labelling state provided by the ancillary basis. Since  $\mathcal{U}_{2}$ is trivial and $\mathcal{U}_{3}$ with two ancillary and CSWAP gates is obvious, we have focused on the construction of a circuit for $\mathcal{U}_{4}$ that is  proven to be optimal using three CSWAP gates and ancillaries. The circuit complexity is analyzed to ensure minimization of the CSWAP and ancillary counts. A $k-2$-step recursive method expands $\mathcal{U}_{4}$ to $\mathcal{U}_{m}$ for any $m=2^{k}$ where the exact gate and ancillary counts have been derived as $3m/2-3$ and $3\log{(m/2)}$, respectively. Such pairing unitary can be applied to any arbitrary number $m$ for $2^{k-1}<m\le{2^{k}}$ through padding to $2^{k}$ states. Alternatively, the pairing unitary for any $m$ can be constructed by decomposing an integer $m$ into a sequence of $2^{k}$. A more general recursive construction has also been developed by partition down the middle, i.e. into two groups of $\ceil{m/2}$ and $\ceil{m/2}-1$ input states at every step. This allows more flexibility with the basic building block of $\mathcal{U}_{2}$, $\mathcal{U}_{3}$, and $\mathcal{U}_{4}$. All recursive constructions discussed above have been demonstrated to have CSWAP and ancillary complexities of $O(m)$ and $O(\log(m))$, respectively. The exact gate and ancillary counts might be of interest in some scenarios for future study.

Our algorithm could serve as an intermediate step in quantum machine learning algorithms, e.g. quantum support vector machine or quantum clustering, that rely on inner products. Existing work in quantum machine learning assumes a QRAM oracle \cite{giovannetti2008quantum} that uses $O(\log m)$ quantum states to assign an address to each classical vector using (e.g. \cite{rebentrost2014quantum}, \cite{kerenidis2019q}).  We only assume that each classical vector is encoded into $O(\log q)$ qubits. Given the challenges of implementing QRAM \cite{ciliberto2018quantum}, we believe that our approach is a step toward a practical implementation of machine learning routines on quantum systems, capable of reducing the complexity from $O(q)$ on a classical system to $O(\log q)$ on a quantum system.

Recently, in \cite{galvao2020quantum}, methods for estimating coherence witness and dimension witness have been proposed that would directly benefit from the approach presented herein.  Generalizations of quantum fingerprinting \cite{buhrman2001quantum} and photon distinguishability \cite{giordani2019experimental} to higher dimensions are natural and can be explored in this context as well. Recent work on determining quantum entanglement presents another interesting research direction \cite{Foulds2021}, as does the possibility to extend the method of inner-product based genomic classifiers given in \cite{kathuria2020implementation}. 
In all of the above mentioned applications,  processing all states in the same circuit is beneficial due to its ability to generate true randomness in state sampling, which offers the possibility to manage the measurement and reduce the sampling complexity. Obtaining all state overlaps by pairwise comparison based on 2-state SWAP test requires additional circuits to randomly select the pair to be compared. In contrast, in our approach, we do not need any additional quantum circuits to achieve this randomness. 

\section{Acknowledgements}
The authors gratefully acknowledge financial support from the George Mason University Quantum Science and Engineering Center and productive discussions with Michael Jarret, Andrew Glaudell and Ernesto Galv{\~a}o. 

\bibliographystyle{unsrt}
\bibliography{references}

\end{document}